\newtheorem{proposition}{Proposition}
\newtheorem{theorem}{Theorem}
\newtheorem{remark}{Remark}
\title{Synchronization with Guaranteed Clock Continuity using Pulse-Coupled Oscillators}
\author{Timothy Anglea\(^{1}\) (\textit{IEEE Student Member}) \ \ Yongqiang Wang\(^{1}\) 
	\thanks{ $^{1}$Timothy Anglea and Yongqiang Wang are with the department of Electrical \& Computer Engineering, Clemson University, Clemson, SC 29634, USA {\{tbangle,yongqiw\}@clemson.edu}}
	(\textit{IEEE Senior Member})}
\date{\today}
\begin{document}		

\maketitle
\begin{abstract}
	Clock synchronization is a widely discussed topic in the engineering literature. Ensuring that individual clocks are closely aligned is important in network systems, since the correct timing of various events in a network is usually necessary for proper system implementation. However, many existing clock synchronization algorithms update clock values abruptly, resulting in discontinuous clocks which have been shown to lead to undesirable behavior. In this paper, we propose using the pulse-coupled oscillator model to guarantee clock continuity, demonstrating two general methods for achieving continuous phase evolution in any pulse-coupled oscillator network. We provide rigorous mathematical proof that the pulse-coupled oscillator algorithm is able to converge to the synchronized state when the phase continuity methods are applied. We provide simulation results supporting these proofs. We further investigate the convergence behavior of other pulse-coupled oscillator synchronization algorithms using the proposed methods.
	
	Index - clock continuity, synchronization, pulse-coupled oscillators, phase jumps
\end{abstract}

\section{Introduction} \label{sec:intro}

Ensuring clock synchronization in a distributed system is a very important and well-studied topic in the fields of electrical engineering and computer science. However, many synchronization algorithms require instantaneous clock value adjustment \cite{ATSClockSync2011}, which leads to discontinuous clocks. These discontinuous clocks are undesirable. If time jumps forward, then there is the potential that a scheduled event will never happen, and if time jumps backwards, then one process may be implemented twice, as shown in Fig. \ref{fig:ClockJump}. It is desirable that the time value evolve continuously while synchronizing with the other sensors within the network \cite{Lamport85, OptimalClockSync87}.

\begin{figure}[t] 
	\includegraphics[width=\columnwidth]{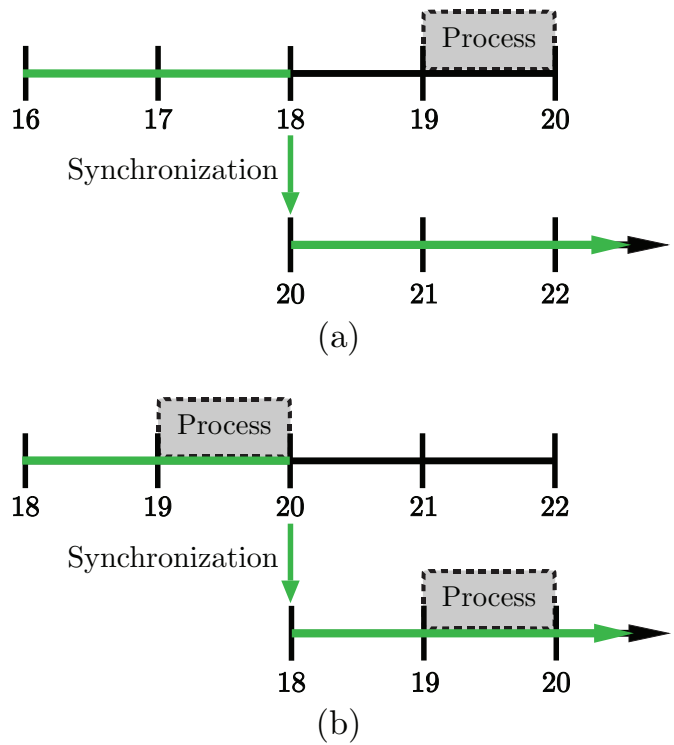}
	\centering
	\caption{Illustration of disadvantages of discontinuous clock synchronization. (a) Synchronization occurs after a jump forward of two time units, and the scheduled process is not executed. (b) Synchronization occurs after a jump backward of two time units, and the scheduled process is executed twice.}
	\label{fig:ClockJump}
\end{figure}


Many clock synchronization algorithms use packet-based communication to share local information \cite{ContinuousClockMock2000, TimingConstraint2001, ClockSyncSurvey2005} and can achieve synchronization with a continuous clock \cite{Lamport85, OptimalClockSync87}. However, as noted in \cite{TimingConstraint2001}, such approaches may require constant adjustment to the clock rate, leading to significant runtime overhead. In this paper, we will show that the pulse-coupled oscillator (PCO) model can be used to achieve guaranteed clock continuity.

The PCO model was first introduced by Peskin in 1975 \cite{peskin:75}. He used pulse-coupled oscillators (PCOs) to model the synchronization of pacemaker cells in the heart. Mirollo and Strogatz later improved the model, providing a rigorous mathematical formulation \cite{mirollo:90}. Communication latency, packet loss, signal corruption, and energy consumption are all minimized due to the simplicity of the communication between oscillators in the network (i.e., single pulses).

Recently, the PCO model has been widely used to synchronize clocks in wireless sensor networks. However, all of the existing PCO synchronization algorithms achieve synchronization through abrupt jumps in the oscillator phase \cite{Werner-Allen:2005:FSN:1098918.1098934, ScalableSync2005, OptimalPRCSync, EnergyEfficientSync2012, Dorfler20141539, NishimuraSync2015, Felipe_TAC15, Felipe_Automatica, Proskurnikov2015, PulseSS2016, KlinglmayrSync2016, Brandner2016}, which, as indicated earlier, may lead to undesired behavior. In this paper, we propose a new synchronization scheme for PCOs, which can guarantee continuity in phase evolution for any jump-based PCO algorithm. We will show that the convergence properties of our previously proposed jump-based synchronization algorithms in \cite{OptimalPRCSync, EnergyEfficientSync2012} are maintained, even when the phase adjustment rule is modified to be continuous. To do so, we will prove that the new mechanism guaranteeing phase continuity amounts to reducing the coupling strength in the conventional PCO model. To our knowledge, this is also the first paper to consider a time-varying coupling strength in a PCO network.

In Section \ref{sec:prelim}, we will define our terms and notation for a PCO network. We will then introduce two methods for maintaining the continuity of the phase of an oscillator in Section \ref{sec:continuity}. In Section \ref{sec:PCOanalysis}, we will analyze the behavior of the network under continuous phase evolution, and show that the behavior can be modeled as a time-varying coupling strength. In Section \ref{sec:SyncProofs}, we will analyze the convergence properties of PCOs under time-varying coupling strengths, and show that a PCO network will synchronize under guaranteed phase continuity. We will then present simulations to further illustrate phase continuity in Section \ref{sec:results}. We will conclude with final remarks and future work in Section \ref{sec:conclusion}.

\section{Pulse Coupled Oscillators Preliminaries} \label{sec:prelim} 

Let us consider a network of \(N\) identical pulse-coupled oscillators. Let \(\theta_i \in [0,1)\) be the associated phase of oscillator \(i \in \mathcal{V} = \{1,2,\cdots,N\}\). For our analysis, each oscillator has an identical fundamental frequency, \(\omega_0\), and evolves at that rate on the interval \([0,1)\).

As the network evolves, each oscillator increases its phase at rate \(\omega_0\). When an oscillator reaches the threshold value of \(1\), it fires a pulse and resets its phase to zero. Any connected oscillators then receive that pulse, being notified of the firing instance of an oscillator in the network. Receiving a pulse will cause an oscillator to change its phase in accordance with the control algorithm. Let us denote the amount that oscillator \(i\) determines to adjust its phase, given a PCO control algorithm, at a firing instance at time \(t\) as \(\psi_i\), where
\begin{equation} \label{eq:PhaseChange}
\psi_i = \alpha \phi_i = \lim_{\tau \downarrow 0} \big(\theta_i(t + \tau)\big) - \theta_i(t) = \theta_i(t^{+}) - \theta_i(t)
\end{equation}
where \(\alpha\) is the coupling strength of the network, and \(\phi_i\) is the phase change amount determined by the PCO control algorithm. The range of the coupling strength allowed for the network is determined by the algorithm, but typical values of coupling strength are within the interval \((0,1]\).

\section{Pulse Coupled Oscillators with Continuous Phase Evolution} \label{sec:continuity} 
To our knowledge, all existing literature regarding PCO networks has the phase value of each oscillator jump discontinuously at firing instances. In this section, we will enhance the PCO model such that the phase value must evolve continuously at all times (except when it resets its phase to zero when it reaches the threshold). Thus, under this restriction, when oscillator \(i\) receives a pulse, it must increase or decrease its individual rate of evolution, \(\omega_i\), for a certain amount of time, \(\tau_i\), in order to achieve the required phase adjustment, \(\psi_i\), while ensuring continuity. If oscillator \(i\) receives another pulse before the time needed to achieve \(\psi_i\) is completed, then the oscillator will use its current phase \(\theta_i\) to redetermine a new \(\psi_i\), and thus determine new values for \(\omega_i\) and \(\tau_i\). 

We propose two methods for adjusting the phase of an oscillator in a continuous fashion: either the frequency at which the oscillator evolves in response to receiving a pulse is kept constant, or the time for which the oscillator evolves is kept constant.

\subsection{Constant Frequency Method} 
In the constant frequency method, the frequency of oscillator \(i\) is increased or decreased by a set amount \(\omega_a\) for an adjustable duration of time \(\tau_i\). The amount of time the oscillator spends at this new frequency is dependent on the phase amount \(\psi_i\) that it needs to adjust:
\begin{equation} \label{eq:FreqMethodTime}
\tau_i = \frac{|\psi_i|}{\omega_a}
\end{equation}

Thus, once an amount of phase adjustment \(\psi_i\) is determined, the oscillator will increase its frequency to \(\omega_i = \omega_0 + \omega_a\) if \(\psi_i\) is positive, or decrease its frequency to \(\omega_i = \omega_0 - \omega_a\) if \(\psi_i\) is negative, for time \(\tau_i\) determined in (\ref{eq:FreqMethodTime}). Once the time \(\tau_i\) has elapsed, the oscillator returns to its fundamental frequency \(\omega_0\). If \(\psi_i\) is zero, then the oscillator remains at its fundamental frequency, \(\omega_0\), and evolves until the next firing instance.

\begin{remark} \label{r:ConstFreqRemark}
	In the constant frequency method, it is possible to have different amounts of frequency change when increasing or decreasing the oscillator's frequency, i.e., \(\omega_{a}^{+}\) and \(\omega_{a}^{-}\) respectively. In this paper, we will focus on using the same amount of frequency change \(\omega_a = \omega_{a}^{+} = \omega_{a}^{-}\) for both increasing and decreasing the frequency of the oscillator.
\end{remark}

\subsection{Constant Time Method} 
In the constant time method, oscillator \(i\) spends a fixed amount of time \(\tau\) at an adjustable frequency \(\omega_i\). The new frequency at which the oscillator evolves is dependent on the phase amount \(\psi_i\) the oscillator needs to adjust:
\begin{equation} \label{eq:TimeMethodFreq}
\omega_i = \omega_0 + \omega_a = \omega_0 + \frac{\psi_i}{\tau}
\end{equation}

Note that \(\omega_a\) can be positive or negative. Thus, once an amount of phase adjustment \(\psi_i\) is determined, the oscillator will update its frequency to \(\omega_i = \omega_0 + \omega_a\) for time \(\tau\). Once the fixed amount of time \(\tau\) has elapsed, the oscillator again returns to its fundamental frequency \(\omega_0\).

\begin{remark} \label{r:PriorMethods}
	The constant time method described above is a general case of how other algorithms ensure clock continuity. In packet-based synchronization algorithms, the value of \(\tau\) is set to be the length of the communication period \cite{OptimalClockSync87,ContinuousClockMock2000}.
\end{remark}

\begin{remark} \label{r:MethodGeneralization}
	Phase jumps, as is standard in the literature, can be seen as a specific case of either of the above phase continuity methods. These cases can be obtained by either taking the limit as \(\omega_a\) goes to infinity in the constant frequency method, or by taking the limit as \(\tau\) goes to zero in the constant time method.
\end{remark}

\begin{remark} \label{r:NegativeFrequency}
	Depending on the parameters chosen in each of the above phase continuity methods, the oscillators may evolve backward in phase (i.e., \(\omega_i < 0\)). Negative frequencies are acceptable in the analysis used in this paper. However, parameters can be chosen to ensure that \(|\omega_a| < 1\) such that the oscillator frequency remains strictly positive.
\end{remark}

\begin{remark} \label{r:ComparisonOfMethods}
	Both methods above achieve the same basic result of having the phase evolve continuously. However, each has their desirable characteristics. The constant frequency method only requires the oscillators to evolve at a countable set of frequencies. The constant time method, however, ensures that the phase adjustment occurs in a set amount of time. The method used should take into consideration the specific application of the PCO network.
\end{remark}

\section{Oscillator Analysis} \label{sec:PCOanalysis} 
We now rigorously analyze the dynamics of oscillators maintaining continuous phase evolution according to the proposed methods above, rather than using phase jumps. Once a method is chosen, all that is required is to take the amount of phase adjustment for oscillator \(i\), i.e. \(\psi_i\), and determine the necessary amount of time, \(\tau_i\), and frequency increase/decrease \(\omega_a\). We then let the oscillator evolve at the new frequency for the required amount of time before it returns to its fundamental frequency, \(\omega _0\).

\subsection{Single Oscillator Behavior}
Let us analyze the behavior of a single oscillator. Once oscillator \(i\) has calculated the necessary change in frequency, \(\omega_i\), to achieve the phase adjustment, \(\psi_i\), in time \(\tau_i\), two possibilities can follow: 1) the oscillator receives no pulses before time \(\tau_i\) that cause a phase adjustment, and 2) the oscillator receives a pulse before time \(\tau_i\) that causes a phase adjustment. If the length of the time interval between the current and previous oscillator firing instances is given as \(t_0\), we can divide these two cases mathematically as 1) \(t_0 \geq \tau_i\), 2) \(t_0 < \tau_i\).

\begin{enumerate}
	\item In the first case, oscillator \(i\) finishes adjusting its phase by \(\psi_i\), and returns to evolving at the fundamental frequency \(\omega_0\). The same effective change in phase has been achieved as if the oscillator had jumped in phase by \(\psi_i\) and evolved normally for a time of length \(t_0\). Thus, no effective change in the phase update rule occurs compared with the conventional instantaneous jump-based PCO model.
	
	\item In the second case, the oscillator has not yet achieved its desired amount of phase change. Rather than having adjusted the whole amount \(\psi_i\), it has adjusted only a portion of that amount, \(\frac{t_0}{\tau_i}\psi_i\), in the time interval between received pulses. It then will use its current phase at the time when the new pulse is received to redetermine new values for \(\psi_i\), \(\tau_i\), and \(\omega_i\). This phase evolution is equivalent to having the oscillator jump in phase by \(\frac{t_0}{\tau_i}\psi_i\), and then evolve normally for a time of length \(t_0\). This fractional amount of the desired phase change can be seen as a reduction of the coupling strength, \(\alpha\), of oscillator \(i\) by the ratio \(\frac{t_0}{\tau_i}\). From (\ref{eq:PhaseChange}), we can write an expression for the effective coupling strength, \(\alpha_{e_i}\), of the oscillator:
	\begin{equation} \label{eq:EffectiveAlpha}
	\alpha_{e_i} = \frac{t_0}{\tau_i} \alpha
	\end{equation}
\end{enumerate}

This second case leads us to analyze a standard PCO network under the condition of a time-varying coupling strength.

\subsection{Time-Varying Coupling Strength}

Let us consider a PCO network under a control algorithm that allows jumps in the phase variable, \(\theta\), but has a coupling strength, \(\alpha\), that varies with time.
\begin{proposition} \label{prop:FiringInstanceAlpha}
	The evolution of an oscillator in a PCO network is dependent only upon the value of the coupling strength, \(\alpha\), at firing instances.
\end{proposition}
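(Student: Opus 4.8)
The plan is to show that the phase trajectory of any oscillator between consecutive firing instances is completely determined by the coupling strength evaluated at those firing instances, so that the continuous-time variation of $\alpha$ plays no role in the dynamics. The key observation is structural: in a PCO network, the coupling strength $\alpha$ enters the dynamics \emph{only} through the phase adjustment $\psi_i = \alpha \phi_i$ in~(\ref{eq:PhaseChange}), and this adjustment is applied only at firing instances. Between firing instances, every oscillator simply evolves freely at its fundamental frequency $\omega_0$ (the continuous-rate modifications having already been absorbed into the effective jump, as established in the Single Oscillator Behavior analysis). Therefore I would first argue that the state of the network is updated discretely, event-by-event, and the only place $\alpha$ appears in any update is multiplied by $\phi_i$ at the instant a pulse is received.

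First I would set up the event-driven description of the network. I would index the firing instances chronologically as $t_1 < t_2 < \cdots$, and note that on each open interval $(t_k, t_{k+1})$ the phase of each oscillator obeys $\dot\theta_i = \omega_0$ with no coupling term present, so the trajectory on that interval is an affine function of the phase values at $t_k^+$ and depends on nothing else. Then I would observe that at each firing instant $t_k$ the discrete update applied to the receiving oscillator(s) is precisely $\theta_i(t_k^+) = \theta_i(t_k) + \alpha(t_k)\,\phi_i$, where $\alpha(t_k)$ is the value of the (time-varying) coupling strength sampled \emph{at that firing instant}. Composing the free evolution on the intervals with the discrete jumps at the firing instants, the entire trajectory is seen to be a function of the initial condition together with the sampled values $\{\alpha(t_k)\}_k$ only. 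Since $\alpha$ never appears except through these sampled values, its behavior strictly between firing instances is irrelevant to the evolution.

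The main obstacle I anticipate is making rigorous the claim that the coupling term genuinely does not act between firing instances under the continuous-phase methods of Section~\ref{sec:continuity}, since in those methods the oscillator is continuously slewing its frequency during the interval rather than making an instantaneous jump. The resolution is to invoke the Single Oscillator Behavior analysis already established: by~(\ref{eq:EffectiveAlpha}), an interrupted continuous adjustment is exactly equivalent to an instantaneous jump of magnitude $\frac{t_0}{\tau_i}\alpha\,\phi_i$ followed by free evolution, so the continuous slewing can be replaced, without changing the trajectory sampled at firing instances, by a discrete jump governed by an \emph{effective} coupling strength $\alpha_{e_i}$ evaluated at the firing instant. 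This replacement reduces the time-varying-frequency system to a standard jump-based PCO with coupling strength sampled only at firing instances, which is exactly the claim. I would close by noting that since the firing times $\{t_k\}$ are themselves determined recursively by the trajectory (an oscillator fires when its phase first reaches $1$), the sampled values $\alpha(t_k)$ and hence the full evolution are determined self-consistently by the initial data, completing the argument.
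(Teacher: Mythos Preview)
Your argument is correct and follows essentially the same idea as the paper's proof: $\alpha$ enters the dynamics only through the jump rule $\theta_i(t^+)=\theta_i(t)+\alpha\,\phi_i$, which is applied only at firing instants, so values of $\alpha$ between firings are irrelevant. The paper states this in two sentences; you dress it up with an event-indexed decomposition and a composition argument, but the content is the same.

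One contextual point: your ``main obstacle'' paragraph is unnecessary here. The proposition is stated in the subsection that explicitly considers a jump-based PCO network with time-varying $\alpha$ (the continuous-phase methods have already been reduced to this setting via~(\ref{eq:EffectiveAlpha}) in the preceding subsection). So you need not re-justify that reduction inside this proof; the free evolution $\dot\theta_i=\omega_0$ between firings is part of the hypothesis, not something to be recovered.
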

\begin{proof}
	The proof for this proposition is straightforward. An oscillator only determines the amount that it needs to jump when it receives a pulse. Thus, the value of the coupling strength is only used at firing instances. Any values the coupling strength takes between firing instances is unused and thus independent of the behavior of the network. Furthermore, if the oscillator receives a pulse, but does not jump (or jumps an amount of zero), then the coupling strength is again independent to the behavior of the network.
\end{proof}

\begin{remark} \label{r:ConvergenceTime}
	Typically, a larger network coupling strength will cause the network to converge more quickly. When considering phase continuity, the effective coupling strength will be less than or equal to the actual coupling strength of the network, implying that phase continuity will cause the network to converge more slowly in general. The lengthening of convergence time is dependent on the parameters chosen for each method.
\end{remark}

\begin{remark} \label{r:AlphaNotAdjusted}
	Note that the phase continuity methods described above do not actively modify the coupling strength between time instances. Only the apparent behavior of the network is being modeled as a reduced coupling strength, \(\alpha_e\). The actual coupling strength, \(\alpha\), remains unchanged throughout the entire evolution of the network.
\end{remark}

\section{Synchronization Analysis} \label{sec:SyncProofs}

We will now use the PCO synchronization strategy given in \cite{EnergyEfficientSync2012} with phase jumps to determine if the convergence properties of the algorithm still hold under a time-varying coupling strength, and thus under the newly proposed phase continuity methods. The algorithm in \cite{EnergyEfficientSync2012} uses a delay-advance phase response curve (PRC) to describe the phase update at firing instances.

Consider a PCO network with \(N\) oscillators in a general, (strongly) connected graph. Without loss of generality, we can order the oscillators in the network according to their phase, such that oscillator \(1\) has the smallest phase and oscillator \(N\) has the largest phase (i.e., \(0 \leq \theta_1 \leq \ldots \leq \theta_N < 1\)). When an oscillator receives a pulse, it updates its phase variable according to the phase response curve, or function, \(Q\), as shown in Fig. \ref{fig:PRC1}.

\begin{figure}[t] 
	\includegraphics[width=\columnwidth]{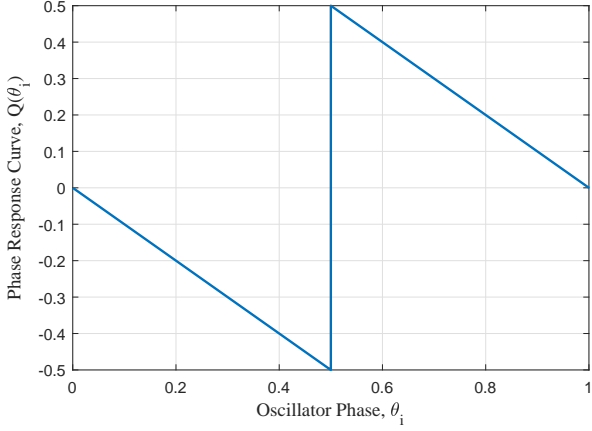}
	\centering
	\caption{Phase Response Curve (PRC) for PRC synchronization given in (\ref{eq:PRCSync})}
	\label{fig:PRC1}
\end{figure}

\begin{equation} \label{eq:PRCSync}
Q(\theta_i) = \left\{ \begin{array}{ll}
			-\theta_i & \mbox{if \(0 \leq \theta_i \leq \frac{1}{2}\)}\\
			(1 - \theta_i) & \mbox{if \(\frac{1}{2} < \theta_i < 1\)} \end{array} \right.
\end{equation}

Note that the phase update is independent of the number and relative positions of other oscillators in the network. Thus, the phase of oscillator \(i\) after a firing instance can be described as
\begin{equation} \label{eq:SyncUpdate}
\theta_i(t^+) = \theta_i(t) + \alpha Q(\theta_i(t))
\end{equation}
where \(\alpha\) is the coupling strength of the network, and, from (\ref{eq:PhaseChange}), we have \(Q(\theta_i(t)) = \phi_i\).

A refractory period of length \(D\) can be included in the phase response curve, as shown in Fig. \ref{fig:PRCRefractory1}. An oscillator does not respond to incoming pulses if its phase is within the region \([0,D)\), and continues to freely evolve. Such a refractory period can improve energy-efficiency and robustness to communication latency \cite{EnergyEfficientSync2012}.

\begin{figure}[t] 
	\includegraphics[width=\columnwidth]{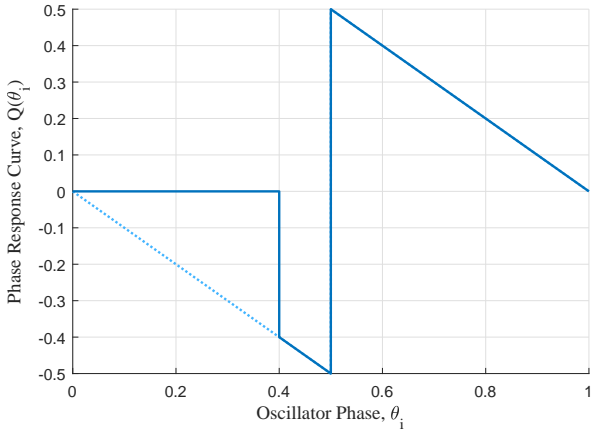}
	\centering
	\caption{Phase Response Curve (PRC) for PRC synchronization given in (\ref{eq:PRCSync}) with an example refractory period \(D = 0.4\).}
	\label{fig:PRCRefractory1}
\end{figure}

Let us refer to an arc as a connected subset of the interval \([0,1)\). We can define the following set of functions, \(v_{i,i+1}\), for all \(i \in \mathcal{V}\):
\begin{equation} \label{eq:PhaseDiffArcs}
v_{i,i+1}(\theta) = \left\{ \begin{array}{ll}
			\theta_{i+1} - \theta_{i} & \mbox{if \(\theta_{i+1} > \theta_{i}\)}\\
			1 - (\theta_{i+1} - \theta_{i}) & \mbox{if \(\theta_{i} > \theta_{i+1}\)} \end{array} \right.
\end{equation}
where oscillator \(N+1\) maps to oscillator \(1\). Notice that these functions do not change between firing instances.

We say that the containing arc of the oscillators is the smallest arc that contains all of the phases in the network. The length of this arc, \(\Lambda\), is given mathematically as

\begin{equation} \label{eq:ContainingArc}
\Lambda = 1 - \max_{i \in \mathcal{V}} \{v_{i,i+1}(\theta)\}
\end{equation}

As the network synchronizes, the length of the containing arc decreases and converges to zero. This value has been shown in \cite{EnergyEfficientSync2012} to decrease after all oscillators have fired once with a constant coupling strength if it is initially less than some \(\bar{\Lambda} \in (0,\frac{1}{2}]\). We will next show that this quantity also decreases monotonically under a bounded, time-varying coupling strength.

\begin{theorem} \label{thr:PRCSyncTheorem}
	Consider pulse-coupled oscillators with a refractory period \(D\) in the phase response curve in (\ref{eq:PRCSync}), as in Fig. \ref{fig:PRCRefractory1}. A strongly connected network of such PCOs using phase jumps will synchronize with oscillators having independent, time-varying, and bounded  \(\alpha \in (0,1]\) if the containing arc of the oscillators is less than some \(\bar{\Lambda} \in (0,\frac{1}{2}]\), and if the refractory period \(D\) is not greater than \(1 - \bar{\Lambda}\).
\end{theorem}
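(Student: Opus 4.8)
The plan is to adapt the constant-coupling contraction argument of \cite{EnergyEfficientSync2012} to the time-varying setting by first invoking Proposition \ref{prop:FiringInstanceAlpha} and then verifying that the monotone shrinkage of the containing arc \(\Lambda\) survives when the coupling strength is allowed to differ from oscillator to oscillator and from firing to firing. By Proposition \ref{prop:FiringInstanceAlpha}, the trajectory of the network is determined entirely by the value of the coupling strength at each firing instance, so the dynamics reduces to a sequence of phase jumps in which the \(i\)th oscillator, when it responds to a pulse, applies \(\theta_i \mapsto \theta_i + \alpha_i Q(\theta_i)\) for some \(\alpha_i \in (0,1]\) that may change at every firing. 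The goal is therefore to show that \(\Lambda\), defined in (\ref{eq:ContainingArc}) through the arc functions (\ref{eq:PhaseDiffArcs}), is non-increasing at every firing and strictly decreasing after each full round of \(N\) firings, for \emph{any} admissible choice of these coupling values.

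First I would fix the configuration just before the leading oscillator (the one with the largest phase) fires. Since \(\Lambda < \bar{\Lambda} \le \tfrac{1}{2}\), advancing time until this oscillator reaches the threshold places every other oscillator in the arc \([1-\Lambda,1)\subset(\tfrac{1}{2},1)\), i.e. strictly inside the advance branch of the PRC in (\ref{eq:PRCSync}). Here the refractory hypothesis enters: because \(D \le 1-\bar{\Lambda} < 1-\Lambda\), no responding oscillator lies in the refractory region \([0,D)\), so all of them advance, while the oscillator that just fired resets to \(0\) and enters \([0,D)\). I would then examine the advance map \(\theta \mapsto \theta + \alpha(1-\theta) = \alpha + (1-\alpha)\theta\): it is an increasing, order-preserving contraction toward \(1\) for every \(\alpha \in (0,1]\) that never overshoots the threshold and compresses the cluster \([1-\Lambda,1)\) into a shorter interval. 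Together with the reset of the just-fired oscillator to \(0\), and using that the arc functions are frozen between firings (as noted below (\ref{eq:PhaseDiffArcs})), this shows that the largest gap in (\ref{eq:ContainingArc}) does not shrink, so \(\Lambda\) does not grow at this firing; I would also track the delay-branch map \(\theta \mapsto (1-\alpha)\theta\) for any already-fired oscillator near \(0\) that responds to a later pulse, which likewise pulls it back toward the cluster.

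Iterating over the \(N\) firings of a round, I would assemble these per-firing estimates into a strict decrease \(\Lambda_{\mathrm{new}} < \Lambda\), mirroring the constant-\(\alpha\) computation of \cite{EnergyEfficientSync2012} but carried out with an independent \(\alpha_i\) at each step; the order-preservation and no-overshoot properties above are what make the bookkeeping of the gaps, and of the identity of the maximal gap in (\ref{eq:ContainingArc}), go through unchanged. Finally, the sequence of containing-arc lengths is monotone and bounded below by \(0\), hence convergent, and I would argue that its limit must be \(0\)---i.e. the oscillators synchronize---since a positive limiting arc would still be contracted at each round.

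The hardest part will be the last two points: making the per-round decrease genuinely \emph{strict and robust} to the coupling values. Because each \(\alpha_i\) is independent and may be taken arbitrarily close to \(0\), the contraction produced by any single firing can be made negligible, so one cannot simply quote a fixed contraction factor from the constant-coupling case; the argument must show that the full round still shrinks \(\Lambda\) and, for convergence to \(0\), that the shrinkage does not vanish as the limit is approached (which is where a uniform lower structure on the \(\alpha_i\), or the combinatorial effect of requiring all \(N\) oscillators to fire, must be used). Two further edge cases need explicit care: the value \(\alpha=1\), for which an advancing oscillator lands exactly on the threshold and fires simultaneously with the pulse it received, and the possibility that the phase ordering \(\theta_1 \le \cdots \le \theta_N\) is rearranged by a jump, which must be shown not to create a new largest gap that enlarges \(\Lambda\).
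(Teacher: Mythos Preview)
Your approach is essentially the paper's: let the leading oscillator fire, observe that the remaining phases lie in the advance region \((\tfrac12,1)\) (the refractory bound \(D\le 1-\bar\Lambda\) being used only to ensure these oscillators are out of refractory), invoke strong connectivity so that the trailing oscillator receives at least one pulse and advances by some \(\phi_j>0\), and conclude that \(\Lambda\) decreases by at least \(\phi_j\) each round. The paper's proof is much terser than your outline---it does not track per-firing order preservation, does not examine the delay-branch map beyond noting the leader may be pulled back, and does not discuss the \(\alpha=1\) simultaneous-firing edge case.

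Your final concern is well-placed and is in fact a gap the paper shares. The paper concludes ``since the containing arc of the network decreases with every cycle, and cannot be negative, then the containing arc converges to zero,'' but a strictly decreasing nonnegative sequence need not converge to zero. With \(\alpha\in(0,1]\) permitted to be arbitrarily close to \(0\) at each firing, the per-round advance \(\phi_j=\alpha\,(1-\theta_j)\) can be made summable, so the containing arc could in principle stall at a positive limit. The paper does not supply the uniform lower bound on the coupling (or any alternative argument) that would close this; your instinct that such a bound is needed is correct, and you should not expect to extract it from the paper's proof.
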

\begin{proof}
	Let us consider a PCO network where the initial phases are within some containing arc \(\Lambda < \bar{\Lambda}\). Without loss of generality, let us assume that oscillator \(i\) has the largest initial phase, \(\theta_{\max}\) at time \(t = 0\), oscillator \(j\) has the smallest initial phase such that \(\theta_j = \theta_{\max} - \Lambda\), and all other oscillator phases reside between oscillator \(i\) and \(j\).
	
	Since oscillator \(i\) has the largest phase, its phase evolves to \(1\) without perturbation and it reaches the threshold at \(t = \frac{1 - \theta_{\max}}{\omega_0}\). At this firing instance, all of the other oscillators have phases between \(1 - \Lambda\) (which is larger than \(\frac{1}{2}\)) and \(1\). In the following time interval of length \(\frac{\Lambda}{\omega_0}\), every oscillator will fire once. Since the network is strongly connected, oscillator \(j\) receives at least one pulse during its phase evolution from \(1 - \Lambda\) to \(1\), and its phase is increased. (The value of phase response curve is positive in the interval \((\frac{1}{2},1)\).) We denote the phase increase as \(\phi_j\), which is strictly positive and dependent on the time-varying coupling strength, \(\alpha \in (0,1]\), and the phase response curve, and hence is time-dependent. Given that the initial phase difference is \(\Lambda\), and that the phase of oscillator \(j\) is increased by \(\phi_j\), then the containing arc of the network decreases by at least \(\phi_j\), as oscillator \(i\) may have decreased its phase due to the pulse received while in the interval \([D,\Lambda)\), if \(D < \Lambda\) holds. (The value of the phase response curve is negative in the interval \((0,\frac{1}{2})\)). The network then continues on to the next cycle, and the above analysis repeats.
	
	Therefore, since the containing arc of the network decreases with every cycle, and cannot be negative, then the containing arc converges to zero, and the network synchronizes.	
\end{proof}

\begin{remark} \label{r:SyncGeneralAlpha}
	The coupling strengths of the oscillators can vary independently from each other, and synchronization will still occur, as long as the coupling is bounded (i.e., \(\alpha \in (0,1]\)).
\end{remark}

Theorem \ref{thr:PRCSyncTheorem} proves that a PCO network can synchronize for any potentially time-varying coupling strength, \(\alpha\). As shown in Sec. \ref{sec:continuity}, phase continuity can be modeled as a reduction in the coupling strength of an oscillator as in (\ref{eq:EffectiveAlpha}). Since the bound of this effective coupling strength is \((0,\alpha]\), the effective coupling strength will also be inside the bound \((0,1]\). Thus, we easily have the following theorem. 
 
\begin{theorem} \label{thr:ContinuousPRCSyncTheorem}
	Consider pulse-coupled oscillators with a refractory period \(D\) in the phase response curve in (\ref{eq:PRCSync}), as in Fig. \ref{fig:PRCRefractory1}. A strongly connected network of such PCOs under one of the phase continuity methods in Sec. \ref{sec:continuity} will synchronize with \(\alpha \in (0,1]\) if the containing arc of the oscillators is less than some \(\bar{\Lambda} \in (0,\frac{1}{2}]\), and if the refractory period \(D\) is not greater than \(1 - \bar{\Lambda}\).
\end{theorem}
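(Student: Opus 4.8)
The plan is to establish this result as a direct corollary of Theorem~\ref{thr:PRCSyncTheorem}, exploiting the modeling equivalence developed in Section~\ref{sec:PCOanalysis}. First I would recall that, under either continuity method, the analysis there shows that the apparent phase change accrued by oscillator $i$ between two consecutive received pulses equals $\frac{t_0}{\tau_i}\psi_i = \alpha_{e_i}\phi_i$ followed by free evolution at $\omega_0$, where $\alpha_{e_i}$ is the effective coupling strength in (\ref{eq:EffectiveAlpha}). Hence, observed only at firing instances, a network running a continuity method is indistinguishable from a conventional jump-based PCO network whose coupling strength is $\alpha_{e_i}$. I would then invoke Proposition~\ref{prop:FiringInstanceAlpha} to conclude that the network evolution depends only on the coupling strength sampled at firing instances, so that the continuous-phase problem reduces exactly to a jump-based problem with the (generally time-varying) coupling $\alpha_{e_i}$.

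The central step is to verify that $\alpha_{e_i}$ stays within the range admitted by Theorem~\ref{thr:PRCSyncTheorem}. I would treat the two cases of Section~\ref{sec:PCOanalysis} separately: when $t_0 \geq \tau_i$ the adjustment completes and $\frac{t_0}{\tau_i}$ acts as $1$, giving $\alpha_{e_i} = \alpha$; when $t_0 < \tau_i$ the ratio lies in $(0,1)$, giving $\alpha_{e_i} = \frac{t_0}{\tau_i}\alpha \in (0,\alpha)$. Since $\alpha \in (0,1]$ by hypothesis, in both cases $\alpha_{e_i} \in (0,\alpha] \subseteq (0,1]$, and this effective strength may vary freely across oscillators and across firing instances. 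This is precisely the setting of independent, time-varying, bounded $\alpha \in (0,1]$ for which Theorem~\ref{thr:PRCSyncTheorem} already guarantees monotone decrease of the containing arc to zero. Invoking that theorem under the stated conditions $\Lambda < \bar{\Lambda} \in (0,\frac{1}{2}]$ and $D \leq 1 - \bar{\Lambda}$ then delivers synchronization and completes the argument.

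I expect the only delicate point to be the bookkeeping in the second step, namely arguing that each redetermination of $\psi_i$, $\tau_i$, and $\omega_i$ upon a new incoming pulse still produces an apparent jump of the form $\alpha_{e_i}\phi_i$ with $\alpha_{e_i}$ strictly positive, so that the open lower bound of $(0,1]$ is respected and no firing instance collapses to a degenerate zero coupling that could stall convergence. The case $\phi_i = 0$ needs no separate treatment, since the final clause of Proposition~\ref{prop:FiringInstanceAlpha} renders the coupling strength irrelevant whenever the oscillator makes no adjustment.
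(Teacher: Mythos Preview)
Your proposal is correct and follows essentially the same approach as the paper: reduce the continuous-phase dynamics to a jump-based system with independent, time-varying effective coupling strengths $\alpha_{e_i} \in (0,\alpha] \subseteq (0,1]$, and then invoke Theorem~\ref{thr:PRCSyncTheorem}. Your treatment is in fact more explicit than the paper's---you carefully separate the cases $t_0 \geq \tau_i$ and $t_0 < \tau_i$, invoke Proposition~\ref{prop:FiringInstanceAlpha}, and flag the $\phi_i = 0$ edge case---whereas the paper's proof compresses all of this into two sentences citing the earlier oscillator analysis.
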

\begin{proof}
	This proof follows easily from Theorem \ref{thr:PRCSyncTheorem}. Phase continuity will result in oscillator coupling strengths being independent, time-varying, and bounded within the interval \((0,\alpha]\), as shown in the analysis of section \ref{sec:continuity}. Thus the conditions for Theorem \ref{thr:PRCSyncTheorem} are met, and the network will converge to the state of synchronization.
\end{proof}

\section{Simulations and Results} \label{sec:results}
\begin{figure}[t] 
	\includegraphics[width=\columnwidth]{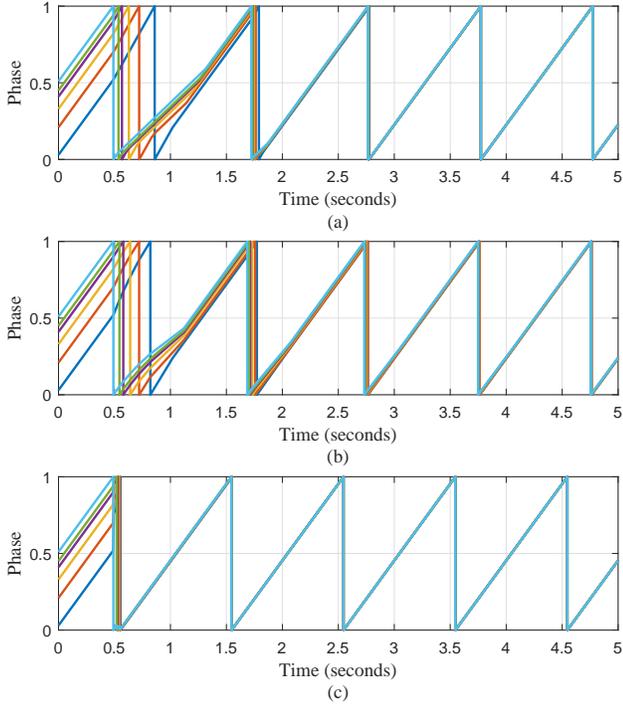}
	\centering
	\caption{Phase evolution of the PRC synchronization algorithm in \cite{EnergyEfficientSync2012} for \(N = 6\) oscillators in an all-to-all topology, \(\alpha = 0.5\), no refractory period, and random initial conditions in a containing arc \(\Lambda < \frac{1}{2}\). (a) Continuous phase evolution under the constant frequency method, with \(\omega_a = 0.3\omega_0\). (b) Continuous phase evolution under the constant time method, with \(\tau = 0.3\) seconds. (c) Phase jumps, for comparison to the phase continuity methods.}
	\label{fig:PRCSyncPhaseEvolution1}
\end{figure}

\begin{figure}[t] 
	\includegraphics[width=\columnwidth]{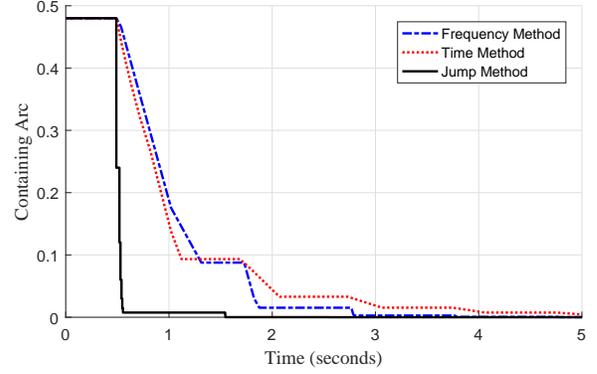}
	\centering
	\caption{Containing arcs, \(\Lambda\) as a function of time for the networks in Fig. \ref{fig:PRCSyncPhaseEvolution1}. The convergence speed of the containing arcs under the constant frequency method, with \(\omega_a = 0.3\omega_0\), and the constant time method, with \(\tau = 0.3\) seconds, is reduced compared with the phase jump case.}
	\label{fig:PRCSyncArcs1}
\end{figure}

We now simulate PCO synchronization algorithms using the phase continuity methods as discussed above. All simulations are performed in MATLAB, with oscillators evolving over the interval \([0,1)\), with fundamental frequency \(\omega_0 = 1\), and period of one second.

\begin{figure}[t] 
	\includegraphics[width=\columnwidth]{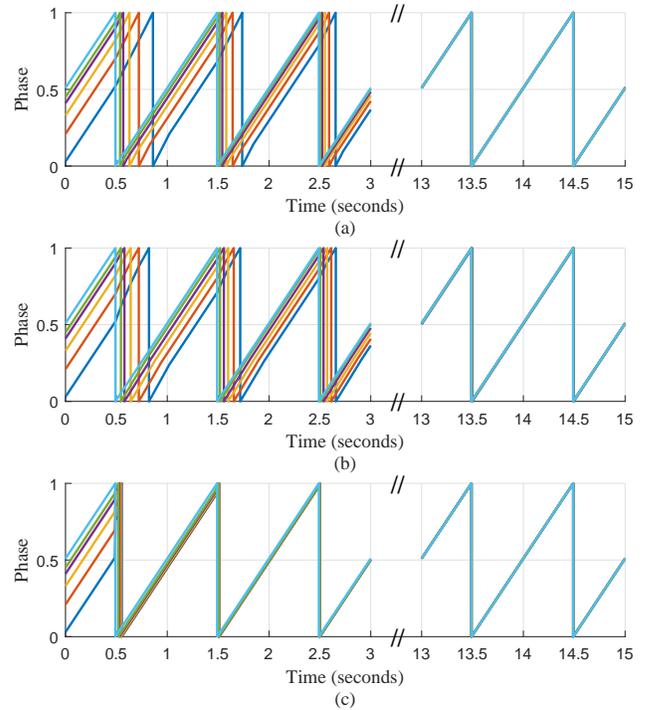}
	\centering
	\caption{Phase evolution of the Energy-Efficient Synchronization algorithm in \cite{EnergyEfficientSync2012} for \(N = 6\) oscillators in an all-to-all topology, \(\alpha = 0.5\), refractory period \(D = 0.5\), and random initial conditions in a containing arc \(\Lambda < \frac{1}{2}\). (a) Continuous phase evolution under the constant frequency method, with \(\omega_a = 0.3\omega_0\). (b) Continuous phase evolution under the constant time method, with \(\tau = 0.3\) seconds. (c) Phase jumps, for comparison to the phase continuity methods.}
	\label{fig:PRCSyncPhaseEvolution2}
\end{figure}

\begin{figure}[t] 
	\includegraphics[width=\columnwidth]{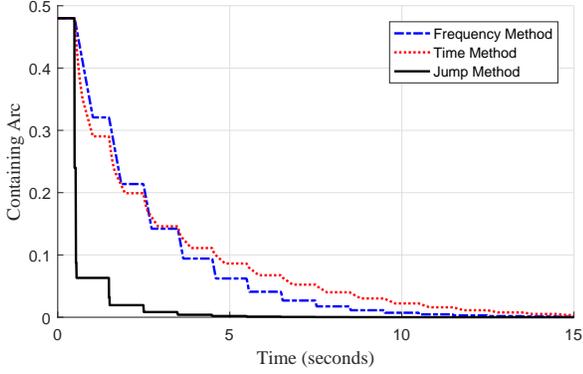}
	\centering
	\caption{Containing arcs, \(\Lambda\) as a function of time for the networks in Fig. \ref{fig:PRCSyncPhaseEvolution2}. The convergence speed of the containing arcs under the constant frequency method, with \(\omega_a = 0.3\omega_0\), and the constant time method, with \(\tau = 0.3\) seconds, is reduced compared with the phase jump case.}
	\label{fig:PRCSyncArcs2}
\end{figure}

\subsection{PRC Synchronization}

We first simulate the PRC synchronization algorithm given in \cite{EnergyEfficientSync2012}. Fig. \ref{fig:PRCSyncPhaseEvolution1} shows that the network does indeed synchronize. As expected, the network converges more slowly when under the phase continuity methods from Sec. \ref{sec:continuity}, due to the reduced effective coupling strengths of the oscillators at most of the firing events. The rate of convergence is illustrated in Fig. \ref{fig:PRCSyncArcs1} where the length of the containing arc is plotted as a function of time.

\begin{figure}[t] 
	\includegraphics[width=\columnwidth]{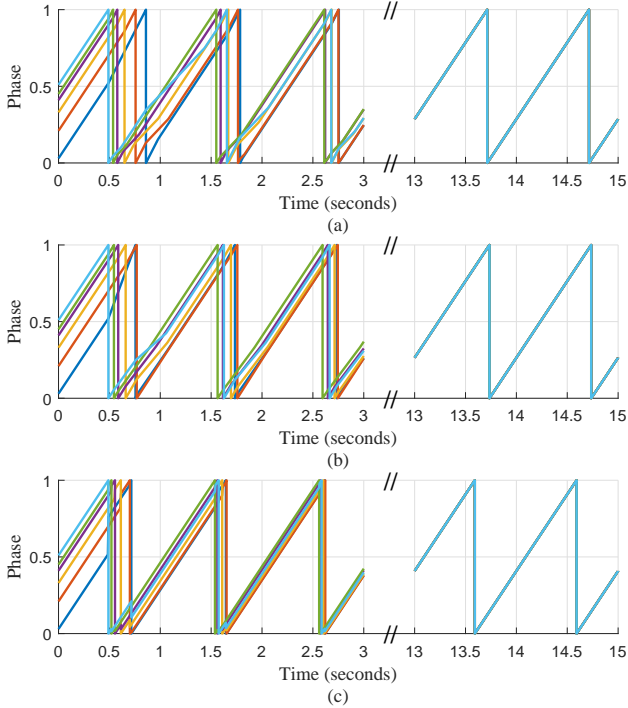}
	\centering
	\caption{Phase Evolution of PRC Synchronization algorithm in \cite{EnergyEfficientSync2012} for \(N = 6\) oscillators in a ring topology, \(\alpha = 0.5\), no refractory period, and random initial conditions in a containing arc \(\Lambda < \frac{1}{2}\). (a) Continuous phase evolution under the constant frequency method, with \(\omega_a = 0.3\omega_0\). (b) Continuous phase evolution under the constant time method, with \(\tau = 0.3\) seconds. (c) Phase jumps, for comparison to the phase continuity methods.}
	\label{fig:PRCSyncPhaseEvolution3}
\end{figure}

\begin{figure}[t] 
	\includegraphics[width=\columnwidth]{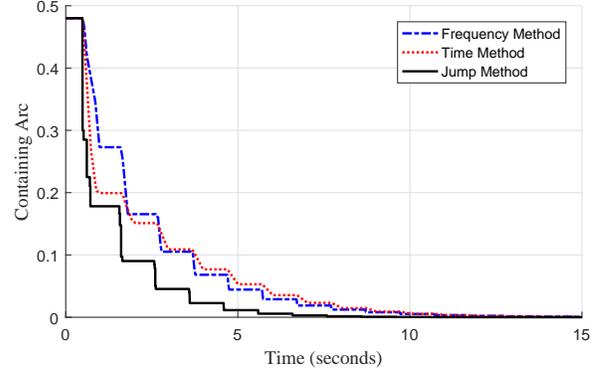}
	\centering
	\caption{Containing arcs, \(\Lambda\) as a function of time for the networks in Fig. \ref{fig:PRCSyncPhaseEvolution3}. The convergence speed of the containing arcs under the constant frequency method, with \(\omega_a = 0.3\omega_0\), and the constant time method, with \(\tau = 0.3\) seconds, is reduced compared with the phase jump case.}
	\label{fig:PRCSyncArcs3}
\end{figure}

As indicated in \cite{EnergyEfficientSync2012}, a refractory period of length \(D\) can be incorporated into the PRC synchronization algorithm, as shown in Fig. \ref{fig:PRCRefractory1}, and synchronization can still be achieved. We illustrate in Fig. \ref{fig:PRCSyncPhaseEvolution2} that the network will synchronize using the phase continuity methods in Sec. \ref{sec:continuity}. As can be seen in Fig. \ref{fig:PRCSyncArcs2}, convergence is slower than when we excluded the refractory period in Fig. \ref{fig:PRCSyncArcs1}. Again, the reduced coupling strength of the network due to phase continuity leads to an increased convergence time.

The PRC synchronization algorithm in \cite{EnergyEfficientSync2012} also achieves convergence in networks with a generally connected topology. We illustrate this case in Fig. \ref{fig:PRCSyncPhaseEvolution3},  where we use the general bidirectional ring topology. Again, we see in Fig. \ref{fig:PRCSyncArcs3} that the use of the phase continuity methods in Sec. \ref{sec:continuity} still allow the network to synchronize, although the convergence rate is decreased due to the reduced effective coupling strength of the oscillators at various firing instances.

\subsection{Peskin Synchronization Algorithm}
We will next consider the original PCO model that was first introduced by Peskin in \cite{peskin:75}. He described the oscillators as ``integrate-and-fire" oscillators, increasing in phase and firing and resetting their phase when they reached a threshold. The phase of an oscillator is mapped onto a state variable, \(x_i(t)\), using the relation \(x_i(t)=f(\theta_i(t))\), where \(f(\theta)\) is a function that is ``smooth, monotonic increasing, and concave down" \cite{mirollo:90}. When a pulse is received, the oscillator maps its current phase to the state variable, increments the state variable by an amount \(\epsilon\), and then maps the state back to the phase using the inverse function \(g(x) = f^{-1}(x)\). That is, the new phase of the oscillator can be written as
\begin{equation} \label{eq:PeskinMap}
\theta_{i}^{+}(t) = g(f(\theta_i(t))+\epsilon)
\end{equation}

If the state variable is incremented past the threshold value (i.e., \(f(\theta_i(t)) + \epsilon > 1\)), then the oscillator immediately fires and resets its phase to zero, and becomes completely synchronized with the oscillator that had fired previously.

Any function \(f(\theta)\) that meets the requirements as above can be used to map the phase into the state variable. Peskin used the following function and its associated inverse:
\begin{equation} \label{eq:Peskinf}
f(\theta) = (1-e^{-\gamma})(1-e^{-\gamma \theta})
\end{equation}
\begin{equation} \label{eq:Pesking}
g(x) = \frac{1}{\gamma} \ln(\frac{1-e^{-\gamma}}{1-e^{-\gamma}-x})
\end{equation}

Mirrolo and Strogatz further improved Peskin's model in \cite{mirollo:90}, and used an alternate function for mapping the phase to the state variable:
\begin{equation} \label{eq:MSf}
f(\theta) = \frac{1}{b} \ln(1+(e^{b}-1)\theta)
\end{equation}
\begin{equation} \label{eq:MSg}
g(x) = \frac{e^{bx}-1}{e^{b}-1}
\end{equation}

Using these functions, an equivalent phase response curve (PRC) can be found by determining \(\phi_i\) from (\ref{eq:PhaseChange}). Fig. \ref{fig:PRC2} illustrates these equivalent PRCs.

\begin{figure}[t] 
	\includegraphics[width=\columnwidth]{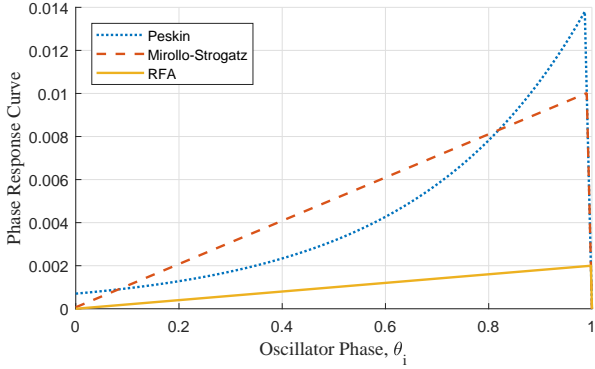}
	\centering
	\caption{Phase Response Curve (PRC) equivalents for the Peskin and Mirollo-Strogatz synchronization algorithms, and the Reachback Firefly Algorithm (RFA).}
	\label{fig:PRC2}
\end{figure}

It is important to note that the Peskin PCO model does not incorporate any kind of coupling strength parameter, \(\alpha\), as in the PRC synchronization given in \cite{EnergyEfficientSync2012}. This is easily verified by noting that the state variable increment \(\epsilon\) does not simply scale the equivalent PRC function, but modifies the overall shape of the function. The Peskin model assumes that the phase jumps the entire amount necessary according to the state mapping function parameters. Equivalently, the Peskin model assumes that the coupling strength \(\alpha\) for the network is always \(1\).

The lack of a coupling strength parameter makes it difficult to extend the analysis of the previous sections to the Peskin model. However, even though there is no coupling strength inherent to the Peskin model, simulations show good synchronization results when the phase continuity methods are applied. For example, using the state variable created from the functions given in (\ref{eq:Peskinf}) and (\ref{eq:Pesking}), and the equivalent PRC function, we find the amount that the oscillator would jump and apply phase continuity methods from section \ref{sec:continuity}. Fig. \ref{fig:PeskinSyncArcs} shows that the phase continuity methods still allow the PCO network to synchronize.

Similarly, using the alternate state variable function introduced by Mirollo and Strogatz given in (\ref{eq:MSf}) and (\ref{eq:MSg}) also allows the network to synchronize under the phase continuity methods, as shown in Fig. \ref{fig:MSSyncArcs}. As expected, the convergence rates are comparable for the phase continuity methods.

\begin{figure}[t] 
	\includegraphics[width=\columnwidth]{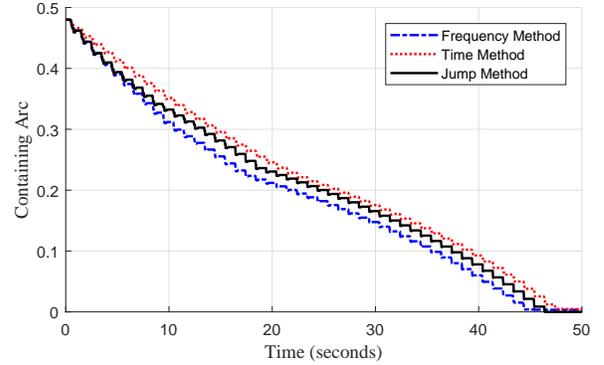}
	\centering
	\caption{Containing arcs, \(\Lambda\), as a function of time for the networks under the Peskin synchronization model with parameters \(\epsilon = 0.002, \gamma = 3\). The convergence property of the containing arcs under the constant frequency method, with \(\omega_a = 0.3\omega_0\), the constant time method, with \(\tau = 0.1\) seconds, is maintained compared to the phase jump case.}
	\label{fig:PeskinSyncArcs}
\end{figure}

\begin{figure}[t] 
	\includegraphics[width=\columnwidth]{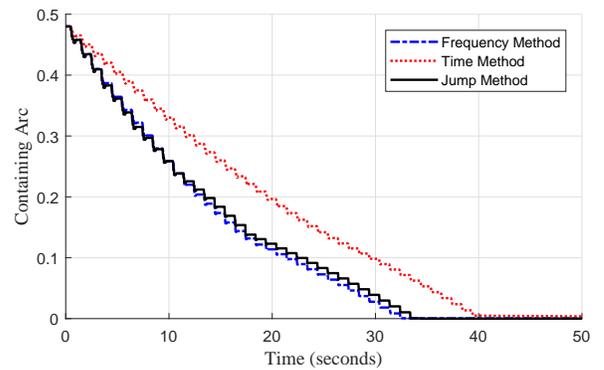}
	\centering
	\caption{Containing arcs, \(\Lambda\), as a function of time for the networks under the Mirollo-Strogatz synchronization model with parameters \(\epsilon = 0.002, b = 5\). The convergence property of the containing arcs under the constant frequency method, with \(\omega_a = 0.3\omega_0\), the constant time method, with \(\tau = 0.1\) seconds, is maintained compared to the phase jump case.}
	\label{fig:MSSyncArcs}
\end{figure}

\subsection{Reachback Firefly Algorithm}
Another synchronization algorithm that has been proposed is the Reachback Firefly Algorithm (RFA) by Werner-Allen et. al. in \cite{Werner-Allen:2005:FSN:1098918.1098934}. This algorithm is based on the Peskin PCO model, where the phase is mapped to a state variable. RFA uses a simple mapping function to decrease computational complexity.
\begin{equation} \label{eq:RFAf}
f(\theta) = \ln(\theta)
\end{equation}
\begin{equation} \label{eq:RFAg}
g(x) = e^{x}
\end{equation}

Fig. \ref{fig:PRC2} also illustrates the equivalent PRC function for the functions used in the RFA model.

The key difference between the RFA model and Peskin algorithm is that the oscillators wait to jump until the moment they fire. As the oscillator receives pulses, it records how much it would jump, according to the state variable mapping, at each time instance. When the oscillator reaches the threshold and fires, it then adds all of the recorded jump amounts from the previous cycle, and jumps by the total amount. This process is then repeated for each cycle until synchronization is achieved.

\begin{figure}[t] 
	\includegraphics[width=\columnwidth]{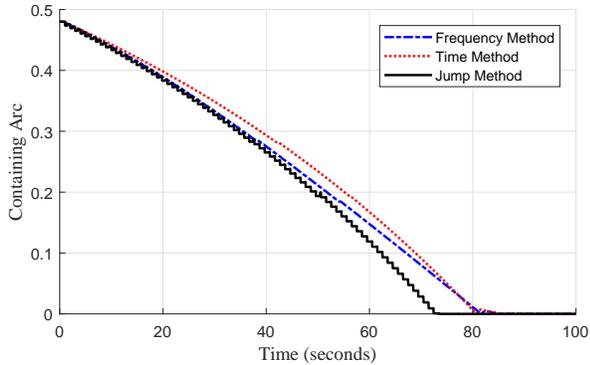}
	\centering
	\caption{Containing arcs, \(\Lambda\), as a function of time for the networks under the Reachback Firefly Algorithm (RFA) with parameter \(\epsilon = 0.002\). The convergence property of the containing arcs under the constant frequency method, with \(\omega_a = 0.007\omega_0\), the constant time method, with \(\tau = 1.1\) seconds, is maintained compared to the phase jump case. The RFA algorithm further allows for phase continuity parameters to be in a broader interval than the Peskin algorithm.}
	\label{fig:RFASyncArcs}
\end{figure}

As with the Peskin model, there is no inherent coupling strength parameter \(\alpha\) in the RFA model. The coupling strength \(\alpha\) is assumed to be always \(1\). This lack of coupling strength incorporated into the RFA model makes it similarly difficult to extend the analysis from the previous sections. However, like with the Peskin model, simulations show good synchronization results when the phase continuity methods are applied. Fig. \ref{fig:RFASyncArcs} illustrates that the phase continuity methods in Sec. \ref{sec:continuity} allow the PCO network to synchronize. It is also important to note that the phase continuity method parameters can be within a broader interval than for the simpler Peskin model. Since each oscillator does not jump when it receives a pulse, and only when it fires, the effective coupling strength can be maximized more easily than in the Peskin model.

\section{Conclusions} \label{sec:conclusion} 
In this paper, we consider the problem of synchronizing clocks while guaranteeing time continuity. To do so, we utilize and analyze the behavior of pulse-coupled oscillator (PCO) networks under the constraint of phase continuity. The original PCO network model used discontinuous phase jumps, but sharp discontinuities in the phase variable of the oscillators is usually not desirable. We present a pair of methods in which the phase variable of an oscillator is able to evolve in a continuous manner.

Using these phase continuity methods, we show that the behavior of the network under these restraints can be modeled as a time-varying coupling strength in the network. Specifically, the coupling strength may be effectively reduced between firing instances. We mathematically prove that a pulse-coupled oscillator network will synchronize with a time-varying coupling strength using a delay-advance phase response curve. Overall, PCO networks under various synchronization algorithms can still achieve desirable behavior using continuous evolutions in the phase variable.

The Peskin model for pulse-coupled oscillators does not include a coupling strength parameter. The model rather assumes that it is constant. Further research may be desirable to consider the effects of having a reduced, and possible time-varying, coupling strength in the Peskin model.

Other phase continuity methods besides the ones proposed in this paper are possible. Specifically, methods that ensure a continuous change in oscillator frequency, rather than just a continuous change in oscillator phase, may be desirable in certain applications. Such methods may require more strict conditions to ensure the desired convergence properties. General analysis of the myriad of other PCO algorithms under phase continuity and time-varying coupling strength reveals much research potential.

\section{Acknowledgment} \label{sec:acknowledge}
We would like to thank those who reviewed the initial drafts of this paper. 


\bibliographystyle{unsrt}
\bibliography{Sync_Clock_Cont_double_column.bib}

\end{document}